\newtheorem{theorem}{Theorem}[section]
\newtheorem{lemma}[theorem]{Lemma}
\newtheorem{proposition}[theorem]{Proposition}
\newtheorem{corollary}[theorem]{Corollary}
\theoremstyle{definition}
\newtheorem{definition}[theorem]{Definition}
\newtheorem{example}[theorem]{Example}
\newtheorem{remark}[theorem]{Remark}
\newcommand{\fq}{\mathbb{F}_q}
\title{List Decoding of Matrix-Product Codes\\from nested codes: an application to\\Quasi-Cyclic codes\thanks{This work was supported in part by the Danish National Research Foundation and the National Science Foundation of China (Grant No.11061130539) for the Danish-Chinese Center for Applications of Algebraic Geometry in Coding Theory and Cryptography, by the Claude Shannon Institute, Science Foundation Ireland Grant 06/MI/006 and by Spanish MEC Grant MTM2007-64704. \textit{2000 Mathematics Subject Classification}. Primary: 94B05; Secondary: 94B35. \textit{Keywords:} Linear Code, Matrix-Product Code, List Decoding, Quasi-Cyclic Code.}}
\date{}
\begin{document}

\maketitle

\centerline{\scshape Fernando Hernando\footnote{INSPIRE fellow funding received from the Irish Research Council for Science, Engineering and Technology.}}
\medskip
{\footnotesize
 \centerline{Department of Mathematics, Universidad Jaume I,}
   \centerline{Campus Riu Sec, 12071, Castell\'on de la plana,
   Spain}
   \centerline{\texttt{carrillf@mat.uji.es}}
}

\medskip

\centerline{\scshape Tom H{\o}holdt
}
\medskip
{\footnotesize
 \centerline{DTU-Mathematics, Technical University of Denmark,}
   \centerline{Matematiktorvet, Building 303, 2800 Kgs. Lyngby, Denmark}
   \centerline{\texttt{T.Hoeholdt@mat.dtu.dk}}
}

\medskip

\centerline{\scshape Diego Ruano
}
\medskip
{\footnotesize
 \centerline{Department of Mathematical Sciences, Aalborg University,}
   \centerline{Fr. Bajersvej 7G, 9920-Aalborg {\O}st, Denmark}
   \centerline{\texttt{diego@math.aau.dk}}
}

\medskip

\begin{abstract}
A list decoding algorithm for matrix-product codes is provided when $C_1, \ldots , C_s$ are nested linear codes and $A$ is a non-singular by columns matrix. We estimate the probability of getting more than one codeword as output when the constituent codes are Reed-Solomon codes. We extend this list decoding algorithm for matrix-product codes with polynomial units, which are quasi-cyclic codes. Furthermore, it allows us to consider unique decoding for matrix-product codes with polynomial units. 
\end{abstract}

\section{Introduction}
Matrix-product codes, $[C_{1}\cdots C_{s}]\cdot A$, are a generalization of several classic codes constructions of codes from old ones \cite{Blackmore-Norton,Ozbudak}. For instance, they extend the $(u|u+v)$-construction. An algorithm for unique decoding  when the codes are nested, $C_1 \supset \cdots \supset C_s$, and $A$ has a certain property, called non-singular by columns, was provided in \cite{hlr}. The algorithm decodes up to half of the minimum distance, assuming that we have a decoding algorithm for $C_i$ that decodes up to half of its minimum distance, for every $i$. 

List decoding was introduced by Elias \cite{eli} and Wozencraft
\cite{woz}. The list decoder is a relaxation over unique decoding that allows the decoder to produce a
list of codewords as answers. It can uniquely decode beyond half of
the minimum distance in some cases or to produces a list of
codewords.

In $1997$,  Sudan presented a polynomial time
algorithm for decoding low rate Reed-Solomon codes beyond the classical
$\frac{d}{2}$ bound. Later \cite{Guruswami-Sudan}, Guruswami and
 Sudan provided a significantly improved version of list
decoder which can correct codes of any rates. Recently, Lee and O'Sullivan  provide a list decoding
algorithm based on the computation of a Gr\"{o}bner basis of a module \cite{Lee-Michael} and Beelen and Brander provide an algorithm that has linear complexity in the code length \cite{peter}.

In this paper we consider a list decoding algorithm for matrix-product codes which is an extension of the algorithm in
\cite{hlr}. The algorithm in \cite{hlr} assumes a known decoding algorithm for every constituent code $C_i$ that decodes up to half of its minimum distance, for this algorithm, we assume that the decoding algorithm is a list-decoding algorithm. Moreover, it is also required that $C_1, \ldots , C_s$ are nested and $A$ is non-singular by columns. The extension is natural, but, we believe, it is non-trivial task since we had to modify the algorithm to deal with lists of codewords, compute the error bound $\tau$ and prove the correctness of the algorithm, among others.

Matrix-Product codes are generalized concatenated codes \cite{Blackmore-Norton, Dumer}, which have an efficient decoding algorithm \cite{Guruswami-Rudra}. However, this algorithm cannot be successfully applied if the matrix $A$ is small, as it is in practice for Matrix-product codes (see Remark \ref{re:GR}).

The probability of getting more than one codeword as output of a list decoding algorithm for Reed-Solomon codes was bounded in \cite{Tom}. In section \ref{se:rs}, we use this computation to estimate an upper bound of the probability of getting more than one codeword as output, when $C_1, \ldots , C_s$ are Reed-Solomon codes. The algorithm in section \ref{BigTau} can become computationally intense, an optimal situation arises considering $s=l=2$.

In section \ref{se:units} we extend the algorithm in section \ref{BigTau} for matrix-product codes with polynomial units \cite{hr}, which are quasi-cyclic codes. Quasi-cyclic codes became important after it was shown that
some codes in this class meet a modified Gilbert-Varshamov bound  \cite{Kas}, however there are no general fast algorithms for decoding them. In \cite{hr}, many of these codes with good parameters were obtained. Using list decoding of matrix-product codes with polynomial units we can uniquely decode these codes up to the half of the minimum distance.

\section{Matrix-Product Codes}\label{sect:mpc}

A matrix-product code is a construction of a code from old ones.

\begin{definition}
Let $C_1, \ldots, C_s \subset \mathbb{F}_q^m$ be linear codes of length
$m$ and a matrix $A=(a_{i,j}) \in \mathcal{M}(\fq, s \times l)$, with
$s\leq l$. The \textbf{matrix-product code} $C=[C_1 \cdots C_s] \cdot A$
is the set of all matrix-products $[c_1 \cdots c_s] \cdot A$ where $c_i\in
C_i$ is an $m \times 1$ column vector $c_i=(c_{1,i},\ldots,c_{m,i})^T$ for $i=1,\ldots, s$. Therefore, a typical codeword $\mathbf{c}$ is

\begin{equation}\label{MatrixCodeword}
\mathbf{c}= \left(
\begin{tabular}{ccc}
$c_{1,1}a_{1,1}+\cdots+c_{1,s}a_{s,1}$ & $\cdots$ & $c_{1,1}a_{1,l}+\cdots+c_{1,s}a_{s,l}$\\
$\vdots$ & $\ddots$ & $\vdots$\\
$c_{m,1}a_{1,1}+\cdots+c_{m,s}a_{s,1}$ & $\cdots$ & $c_{m,1}a_{1,l}+\cdots+c_{m,s}a_{s,l}$\\
\end{tabular}\right).
\end{equation}
\end{definition}

Clearly the $i$-th column of any codeword is an element of the form $\sum_{j=1}^s a_{j,i}
c_j\in \mathbb{F}_q^m$, therefore reading the entries of the $m\times l$-matrix above in column-major order, the codewords can be viewed as vectors of length $ml$,
\begin{equation}\label{VectorCodeword}
\mathbf{c}=\left(\sum_{j=1}^s a_{j,1} c_j, \ldots , \sum_{j=1}^s a_{j,l} c_j \right)
\in\mathbb{F}_q^{ml}.
\end{equation}

From the above construction it follows that a generator matrix
of $C$ is of the form:
\begin{displaymath}
G=\left(
\begin{tabular}{cccccc}
$a_{1,1}G_1$ & $a_{1,2}G_1$ & $\cdots$ & $a_{1,s}G_1$& $\cdots$ & $a_{1,l}G_1$\\
$a_{2,1}G_2$ & $a_{2,2}G_2$& $\cdots$ & $a_{2,s}G_2$ & $\cdots$ & $a_{2,l}G_2$\\
$\vdots$ & $\vdots$& $\cdots$ & $\vdots$& $\cdots$ & $\vdots$\\
$a_{s,1}G_s$ & $a_{s,2}G_s$& $\cdots$ & $a_{s,s}G_s$ & $\cdots$ & $a_{s,l}G_s$\\
\end{tabular}\right),
\end{displaymath}
where $G_i$ is a generator matrix of $C_i$, $i=1,\ldots,s$.
Moreover, if $C_i$ is a $[m,k_i,d_i]$ code then one has that
$[C_1 \cdots C_s] \cdot A$ is a linear code over $\mathbb{F}_q$ with
length $lm$ and dimension $k=k_1+\cdots+k_s$  if the matrix $A$ has
full rank and $k < k_1+\cdots+k_s$ otherwise.

Let us denote by $R_i= (a_{i,1},\ldots,a_{i,l})$ the element of
$\mathbb{F}_q^l$ consisting of the $i$-th row of $A$, for
$i=1,\ldots,s$. We denote by $D_i$ the minimum distance
of the code $C_{R_i}$ generated by $\langle R_1,\ldots,
R_i\rangle$ in $\fq^l$. In \cite{Ozbudak} the following lower bound for the minimum distance of the matrix-product code $C$ is obtained,
\begin{equation}\label{lowebound}
d(C)\geq \min\{d_1D_1,d_2D_2, \ldots ,d_s D_s\},
\end{equation}where $d_i$ is the minimum distance of $C_i$. If $C_1, \ldots, C_s$ are nested codes, $C_1 \supset \cdots \supset C_s$, the previous bound is sharp \cite{hlr}.

In \cite{Blackmore-Norton}, the following condition for the matrix $A$ is introduced.

\begin{definition}\cite{Blackmore-Norton}\label{de:nsc}
Let $A$ be a $s\times l$ matrix and  $A_t$ be the matrix consisting of the first $t$ rows of $A$. For $1\leq j_1< \cdots < j_t\leq l$, we denote by $A(j_1,\ldots,j_t)$ the $t\times t$ matrix consisting of the columns $j_1,\ldots,j_t$ of $A_t$.

A matrix $A$ is non-singular by columns if $A(j_1,\ldots,j_t)$ is non-singular for each $1\leq t\leq s$ and $1\leq j_1< \cdots < j_t\leq l$. In particular, a non-singular by columns matrix $A$  has full rank.
\end{definition}

Moreover, if $A$ is non-singular by columns and $C_1 \supset \cdots \supset C_s$, we have $d(C) = \min\{ld_1,(l-1)d_2, \ldots ,(l-s+1)d_s \}$ \cite{hlr}.

In \cite{hlr} were presented a decoding algorithm for the matrix-product code $C=[C_1 \cdots C_s] \cdot A \subset \fq^{ml}$, with $A$ non-singular by columns and $C_1 \supset \cdots \supset C_s$, assuming that we have a decoding algorithm for $C_i$, for $i=1,\ldots,s$. The algorithm in \cite{hlr} decodes up to half of the minimum distance. In next section we provide a list decoding algorithm for such codes, assuming that we have a list decoding algorithm for $C_i$, $i=1,\ldots, s$.

\section{List Decoding Algorithm for matrix-product codes}\label{BigTau}

Let $C \subset\mathbb{F}_q^n$ and $\tau> 1$. For $r \in \mathbb{F}_q^n$,
a list decoding algorithm with error bound $\tau$ provides a list with all
codewords in $C$ that differ from $r$ in at most $\tau$ places. If $\tau \leq \lfloor \frac{d-1}{2} \rfloor$, it will result into unique decoding.

We present a list decoding algorithm for a class of matrix-product codes,
it is an extension of \cite[Algorithm 1]{hlr}. Namely, we consider $s$ nested linear codes $C_1,\ldots, C_s \subset \fq^m$ and a non-singular by columns matrix $A \in \mathcal{M}(\fq,s \times l)$, where $s \le l$. We provide a list decoding algorithm for the matrix-product code $C=[C_1 \cdots C_s] \cdot A \subset \fq^{ml}$, assuming that we have a list decoding algorithm $LDC_i$ for $C_i$ with error bound $\tau_i$. In particular, each $LDC_i$  answers an empty list if there is no
codeword in $C_i$ within distance $\tau_i$  of the received word.

Our list algorithm for $C$ decodes up to
\begin{equation}\label{TauValue}
\tau=\min \{l\tau_1+(l-1),(l-1)\tau_2+(l-2), \ldots, (l-s+1)\tau_s+l-s \}.
\end{equation}

We first describe the main steps in our decoding algorithm. The algorithm is outlined as a whole in procedural form in Algorithm
\ref{alg:dec}.

Consider the codeword $\mathbf{c}=(\sum_{j=1}^s a_{j,1} c_j, \ldots , \sum_{j=1}^s a_{j,l} c_j )$, where $c_j \in C_j$, for all $j$. Suppose that $\mathbf{c}$ is sent and that we receive $\mathbf{p} = \mathbf{c} + \mathbf{e}$, where $\mathbf{e}=(e_1,e_2,\ldots,e_l)\in \fq^{ml}$ is an error vector. We denote  by $p_i = \sum_{j=1}^s a_{j,i} c_j + e_i \in \fq^{m}$ the $i$-th block of $\mathbf{p}$, for $i=1,\ldots,l$. Let $\{ i_1 , \ldots , i_s \} \subset \{ 1,\ldots , l  \}$ be an ordered subset of indices. We now also suppose that $\mathbf{e}$ satisfies the extra property that
\begin{equation}\label{eq:condi}
wt(e_{i_j}) \le \tau_j ~ \mathrm{for~all}~j \in \{ 1,\ldots, s\}.
\end{equation}

Since $C_1 \supset \cdots \supset C_s$, each block $\sum_{j=1}^s a_{j,i} c_j$ of $\mathbf{c}$ is a codeword of $C_1$. Therefore, we decode the $i_1$-th block $p_{i_1}$ of $\mathbf{p}$ using $LDC_1$ and we obtain a list $L_1$. Since $wt(e_{i_1}) \le \tau_1$, we have $\sum_{j=1}^s a_{j,i_1} c_j \in L_1$. In practice we do not know which one of the elements in $L_1$ is $\sum_{j=1}^s a_{j,1} c_j$, therefore we should consider the following computations for every element in $L_1$. Assume now that we consider $\sum_{j=1}^s a_{j,i_1} c_j \in L_1$, hence we obtain $e_{i_1}= p_{i_1} - \sum_{j=1}^s a_{j,i_1} c_j$ and we can eliminate $c_1$ in every other block (although we do not know $c_1$) in the following way: we consider a new vector $\mathbf{p}^{2)} \in
\fq^{ml}$ with components
$$
p^{2)}_i = p_i - \frac{a_{1,i}}{a_{1,i_1}} (p_{i_1} - e_{i_1}) =
\sum_{j=2}^s a^{2)}_{j,i} c_j   + e_i, ~ \mathrm{for}~ i \neq i_1,
$$
where
$a^{2)}_{j,i}=a_{j,i}-\frac{a_{1,i}}{a_{1,i_1}}a_{j,i_1}$, and
$p^{2)}_{i_1} = p_{i_1} - e_{i_1}$. Since $A$ is a non-singular by columns matrix, the elements of the first row of $A$ are non-zero, and so the denominator $a_{1,i_1}$ is non-zero.

Since $C_2 \supset \cdots \supset C_s$, we notice that the $i$-th block of $\mathbf{p}^{2)}$ is a codeword of $C_2$ plus the error block $e_i$, for $i \in \{ 1, \ldots, s\} \setminus \{i_1\}$. We now decode the $i_2$-th block $p^{2)}_{i_2} = \sum_{j=2}^s a^{2)}_{j,i_2} c_j  + e_{i_2}$ of $\mathbf{p}^{2)}$ using $LDC_2$ and we obtain a list $L_2$. Since $w(e_{i_2}) \le \tau_2$, we have $\sum_{j=2}^s a^{2)}_{j,i_2} c_j \in L_2$. In practice we do not know again which one of the elements in $L_2$ is $\sum_{j=2}^s a^{2)}_{j,i_2} c_j$, therefore we should consider the following computations for every element in $L_2$. Assume now that we consider $\sum_{j=2}^s a^{2)}_{j,i_2} c_j \in L_2$, hence we obtain $e_{i_2}$ and, as before, we can eliminate $c_2$ in every other block (although we do not know $c_2$) as follows: we consider a new vector $\mathbf{p}^{3)} \in \fq^{ml}$ with components
$$
p^{3)}_i = p^{2)}_i - \frac{a^{2)}_{2,i}}{a^{2)}_{2,i_2}}
(p^{2)}_{i_2} - e_{i_2}) = \sum_{j=3}^s a^{3)}_{j,i} c_j   + e_i,
~ \mathrm{for}~ i \neq i_1,i_2,
$$
where
$a^{3)}_{j,i}=a_{j,i}^{2)}-\frac{a_{2,i}^{2)}}{a_{2,i_2}^{2)}}a_{j,i_2}^{2)}$, $p^{3)}_{i_1} = p^{2)}_{i_1}$ and $p^{3)}_{i_2} = p^{2)}_{i_2} - e_{i_2}$.

Notice that the $i$-th block of $\mathbf{p}^{3)}$ is a codeword of $C_3$
plus the error block $e_i$, for $i \in \{ 1, \ldots, s\} \setminus
\{i_1, i_2\}$.

Then we iterate this process, defining $\mathbf{p}^{k)}$ for
$k=3,\ldots,s$, and decoding the $i_k$-th block using $LDC_k$. In
this way, we obtain the error blocks $e_i$, and the corresponding
codeword blocks $\sum_{j=1}^s a_{j,i} c_j$, for $i \in
\{i_1,\ldots , i_s\}$. The vector $(\sum_{j=1}^s a_{j,i_1} c_j,
\ldots , \sum_{j=1}^s a_{j,i_s} c_j )$ formed from these $s$
decoded blocks is equal to the product $[c_1 \cdots c_s] \cdot
A(i_1,\ldots,i_s)$, where $A(i_1,\ldots,i_s)$ is the $s\times
s$-submatrix of $A$ consisting of the columns $i_1, \ldots, i_s$.
Since this matrix is full rank, we can now easily compute $c_1,
\ldots , c_s$ by inverting $A(i_1,\ldots,i_s)$ or solving the
corresponding linear system. Finally we recover the remaining
$l-s$ codeword blocks ``for free" (i.e. no decoding procedure is
involved for these blocks) by simply recomputing the entire
codeword $\mathbf{c}=[c_1 \cdots c_s] \cdot A=(\sum_{j=1}^s a_{j,1} c_j,
\ldots , \sum_{j=1}^s a_{j,l} c_j )$, since we know the $c_j$'s
and the matrix $A$.

For each elimination step in the above procedure, it is necessary
that $a^{k)}_{k,i_k} \neq 0$, for each $k=2,\ldots,s$, to avoid
zero division. We claim that this follows from the
non-singular by columns property of $A$, exactly in the same way as in \cite{hlr}. Let $A^{1)}=A$. The
matrix $A^{k)} = (a^{k)}_{i,j})\in \mathcal{M}(\fq,s \times
l),k=2,\ldots,s,$ is obtained recursively from $A^{k-1)}$ by
performing the following $l-(k-1)$ elementary column operations:
$$\mathrm{column}_{i}(A^{k)})= \mathrm{column}_{i}(A^{k-1)}) -
\frac{a^{k-1)}_{k-1,i}}{a^{k-1)}_{k-1,i_{k-1}}}\mathrm{column}_{i_{k-1}}(A^{k-1)}),$$
for each $i \notin \{ i_1,\ldots , i_{k-1} \}$. These operations
introduce $l-(k-1)$ additional zero elements in the $k-1$-th row
of $A^{k)}$ at each iteration. Hence the submatrix of $A^{k)}$ given
by the first $k$ rows and the $i_1, \ldots , i_{k}$ columns, is a
triangular matrix (in this case, a column permutation of a lower
triangular matrix) whose determinant is $a^{k)}_{1,i_1} \cdots
a^{k)}_{k,i_{k}}$. Since $A$ is non-singular by columns, this submatrix is non-singular. It follows that the
determinant is non-zero, and therefore $a^{k)}_{k,i_{k}}\ne 0$.

The procedure described above will generate a list that includes the sent word, if for a given choice of indices $\{ i_1 , \ldots , i_s \} \subset \{
1,\ldots , l  \}$, each error block satisfies $wt(e_{i_j})\le \tau_j$, for all $j=1,\ldots,s$. The output's procedure
may not include the sent word if $wt(e_{i_j})> \tau_j$ for some $j$.

In the previous description, we have only shown the computations for one of the different choices that the list decoder algorithms $C_i$, for $i=1, \ldots, s$, may give us. However, if $\# L_1 >1$ then we should consider a different word $\mathbf{p}^{2)}$ for every $\ell \in L_1$. One can see how this tree is created for every element in $L$ in line 8 of Algorithm \ref{alg:dec}. If a decoder $LDC_{j}$ outputs an empty list, for all possible choices in $L_{j-1}$, then we consider another ordered subset of indices, and start the procedure again.

We now prove that for every error vector $\mathbf{e}$ with $wt(\mathbf{e}) \le \tau$ there exists a \emph{good} set of indices
$\{i_1,\ldots , i_s\} \subset \{ 1,\ldots, l\}$ satisfying condition (\ref{eq:condi}). We should repeat the procedure described above, with every ordered subset of indices and collect all the decoded words, in order to be sure that the ``good'' set of indices is considered.

\begin{theorem}\label{THAlgDecPlotSum}
Let $C$ be the matrix-product code $[C_1 \cdots C_s] \cdot A$,
where $C_1 \supset \cdots \supset C_s$ and $A$ is a non-singular
by columns matrix. Let $\mathbf{e}=(e_1,e_2,\ldots,e_l)\in \fq^{ml}$ be an
error vector with $wt(\mathbf{e}) \le \tau$ (see (\ref{TauValue})). Then
there exists an ordered subset $\{i_1,\ldots , i_s\} \subset \{
1,\ldots, l\}$ satisfying $wt(e_{i_j})\le \tau_j $, for all $j \in \{1,\ldots, s\}$.
\end{theorem}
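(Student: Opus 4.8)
The plan is to reduce the existence of a good ordered index set to a single counting inequality on the block weights, which I then establish by contradiction. First I would record exactly what the hypothesis controls: writing $w_i = wt(e_i)$ for the weight of the $i$-th block, the blocks partition the $ml$ coordinates, so the assumption $wt(\mathbf{e}) \le \tau$ says precisely $\sum_{i=1}^l w_i \le \tau$. I then sort the blocks by weight, fixing a permutation $\pi$ of $\{1,\ldots,l\}$ with $w_{\pi(1)} \le w_{\pi(2)} \le \cdots \le w_{\pi(l)}$.

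The key reduction is that it suffices to prove the $s$ inequalities $w_{\pi(j)} \le \tau_j$ for $j = 1,\ldots,s$. Indeed, if these hold, then setting $i_j := \pi(j)$ gives an ordered subset $\{i_1,\ldots,i_s\}$ of distinct indices (distinct since $\pi$ is a permutation) for which $wt(e_{i_j}) = w_{\pi(j)} \le \tau_j$ for every $j$, which is exactly the conclusion and exactly condition (\ref{eq:condi}). This amounts to matching the $j$-th lightest block to the $j$-th decoding step; observe that no assumption on the relative sizes of the $\tau_j$ is needed, because positions are matched directly rather than by a rearrangement argument.

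To establish $w_{\pi(j)} \le \tau_j$ I would argue by contradiction. Suppose $w_{\pi(j)} > \tau_j$, i.e. $w_{\pi(j)} \ge \tau_j + 1$, for some $j \in \{1,\ldots,s\}$. Since the weights are sorted increasingly, every block ranked $j$ or later satisfies $w_{\pi(i)} \ge \tau_j + 1$ for $i = j,\ldots,l$, and these are $l-j+1$ distinct blocks. Hence
$$ wt(\mathbf{e}) = \sum_{i=1}^l w_{\pi(i)} \ge \sum_{i=j}^l w_{\pi(i)} \ge (l-j+1)(\tau_j+1) = \big[(l-j+1)\tau_j + (l-j)\big] + 1. $$
By the definition (\ref{TauValue}) of $\tau$ as a minimum, $\tau \le (l-j+1)\tau_j + (l-j)$, so the right-hand side is at least $\tau + 1$; thus $wt(\mathbf{e}) \ge \tau + 1$, contradicting $wt(\mathbf{e}) \le \tau$.

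This is essentially a pigeonhole estimate, so I expect no serious obstacle. The only points needing care are the reduction step, where one must check that assigning the $j$-th lightest block to $i_j$ reproduces precisely the block-wise bounds of (\ref{eq:condi}), and the counting step, where the $l-j+1$ heavy blocks must genuinely be distinct, which the sorted relabelling guarantees. The pleasant feature of the argument is that $(l-j+1)(\tau_j+1)$ overshoots the $j$-th term of the minimum by exactly $1$, which is precisely why the value of $\tau$ in (\ref{TauValue}) is the right one for this combinatorial step.
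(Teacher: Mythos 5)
Your proof is correct and is essentially the paper's argument: both rest on the same pigeonhole estimate that if no eligible block had weight at most $\tau_j$, then the $l-j+1$ remaining blocks would each contribute at least $\tau_j+1$, forcing $wt(\mathbf{e}) \ge (l-j+1)\tau_j + (l-j) + 1 > \tau$. The only difference is presentational — you sort the block weights and take the $j$-th lightest as $i_j$, whereas the paper selects the indices greedily by induction on $j$ — and the two selections coincide in substance.
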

\begin{proof}
We claim that there exists $i_1$ such that $wt(e_{i_1})\le \tau_1$.
Suppose that there is no $i_1\in\{ 1,\ldots, l\}$ with
$wt(e_{i_1})\le \tau_1$, that is, $wt(e_{i}) \geq \tau_1+1$, for all $i=1,\ldots,l$. This implies that
$$
wt(\mathbf{e})=wt(e_1)+\cdots+wt(e_l) \ge l\tau_1 +l>\tau
$$which contradicts our assumption.

Let us assume that the property holds for a subset $\{i_1,\ldots ,
i_{j-1}\} \subset \{ 1,\ldots, l\}$ of size $j-1<s$. We now prove
it holds for a subset of size $j$. Suppose that there is no $i_j$
with $wt(e_{i_j})\le \tau_j$, that is, $wt(e_{i}) > \tau_j+1$, for all $i\in \{1 ,\ldots,l \}\setminus \{ i_1
, \ldots,  i_{j-1}  \}$. This implies that
\begin{eqnarray*}
wt(\mathbf{e}) & \ge & \sum_{k=1}^{j-1} wt(e_{i_k}) + \sum_{k=j}^{l} wt(e_{i_k})\\
& > & \sum_{k=1}^{j-1} wt(e_{i_k}) + (l-j+1)\tau_j + (l-j+1)\\
 & \ge  &  (l-j+1)\tau_j + (l-j+1)> \tau\\
\end{eqnarray*}which contradicts our assumption and the result holds.
\end{proof}

Summarizing, we can now formulate our decoding algorithm for
$C=[C_1 \cdots C_s] \cdot A \subset \fq^{ml}$, where $C_1 \supset
\cdots \supset C_s$ and $A$ is a non-singular by columns matrix,
in procedural form in Algorithm \ref{alg:dec}.

\newcommand{\decc}{\ensuremath{\mbox{\sc List decoding algorithm for $C=[C_1 \cdots C_s] \cdot A$}}}
\begin{algorithm}[h!]
  \caption{\decc}\label{alg:dec}
  \begin{algorithmic}[1]
    \REQUIRE Received word $\mathbf{p} =\mathbf{c} + \mathbf{e}$ with $\mathbf{c} \in C$ and $wt(\mathbf{e}) \le \tau$. $C_1 \supset \cdots \supset C_s$ nested codes and $A$  a non-singular by columns matrix. Decoder $LDC_i$ for code $C_i$, $i = 1,\ldots ,s$.
    \ENSURE List of all codewords that differ from $p$ in at most $\tau$ places.
    \medskip
    \STATE $\mathbf{p'}=\mathbf{p}$; $A'=A$; $Dec = \{  \}$;
    \FOR{ $\{i_1,\ldots,i_s\}\subset \{1,\ldots , l\}$  }
    \STATE $\mathbf{p}=\mathbf{p'}$; $A=A'$; $U'=\{\mathbf{p}\}$;
      \FOR{$j=1,\ldots , s$}
        \STATE $U=U'$; $U'=\{\}$;
        \FOR{$\mathbf{u}$ in $U$}
             \STATE $L= LDC_j (u_{i_j})$;
             \FOR{$\ell$ in $L$}
                  \STATE $tmp=(0,\ldots,0) \in \mathbb{F}^{m}$;
                  \FOR{$k=j+1,\ldots,s$}
                       \STATE $tmp_{i_k}=u_{i_k}- \frac{a_{j,i_k}}{a_{j,i_j}} \ell$;
                  \ENDFOR
                   \STATE $U'=U'\cup \{tmp\}$;

             \ENDFOR
        \ENDFOR
        \IF{$U'=\{\}$}
            \STATE {\it Break} the loop and consider another $i_1,\ldots ,i_s$ in line 2;
        \ENDIF
        \FOR{$k=j+1,\ldots,s$}
            \STATE $\mathrm{column}_{i_k}(A)= \mathrm{column}_{i_k}(A) - \frac{a_{j,i_k}}{a_{j,i_j}} \mathrm{column}_{i_j}(A)$;
        \ENDFOR
      \ENDFOR
        \FOR{$\mathbf{u}$ in $U'$}
           \STATE Obtain $(c_1,\ldots,c_s)$ from $u_{i_1},\ldots,u_{i_s}$;
           \STATE $\mathbf{p} =[c_1 \cdots c_s] \cdot A$; (see (\ref{MatrixCodeword}) and (\ref{VectorCodeword}))
           \IF{$wt(\mathbf{p}-\mathbf{p'}) \le \tau $}
                \STATE $Dec=Dec \cup \{ \mathbf{p} \}$;
           \ENDIF
        \ENDFOR
    \ENDFOR
  \end{algorithmic}
\end{algorithm}

\begin{corollary}\label{cIsInL1}
If $wt(\mathbf{e}) \leq \tau$ then $\mathbf{c}$ is in the list given as output of Algorithm \ref{alg:dec}.
Hence, the algorithm described in this section is a list decoding algorithm with error bound $\tau$, i.e,
$L=\{\mathbf{c}\in C \mid wt(\mathbf{p}-\mathbf{c})\leq \tau\}$.
\end{corollary}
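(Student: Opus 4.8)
The plan is to use Theorem \ref{THAlgDecPlotSum} to locate a good index set and then follow the correct branch through Algorithm \ref{alg:dec}. First I would establish that the sent word $\mathbf{c}$ appears in the output. By Theorem \ref{THAlgDecPlotSum}, since $wt(\mathbf{e}) \le \tau$, there is an ordered subset $\{i_1,\ldots,i_s\}$ with $wt(e_{i_j}) \le \tau_j$ for all $j$, and the outer loop (line~2) eventually considers exactly this subset, so it suffices to show that this iteration places $\mathbf{c}$ in $Dec$. Proceeding by induction on $j=1,\ldots,s$ along the elimination described before the theorem, the invariant is that the correct block $\sum_{j'=j}^s a^{j)}_{j',i_j} c_{j'}$ occurs as one of the elements $\ell$ of the list $L = LDC_j(u_{i_j})$ on a distinguished branch. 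The base case $j=1$ holds because $u_{i_1}=p_{i_1}$ differs from the $C_1$-codeword $\sum_{j'=1}^s a_{j',i_1} c_{j'}$ by $e_{i_1}$, whose weight is at most $\tau_1$, so this codeword lies in $L_1$. For the inductive step, selecting the correct $\ell$ performs precisely the elimination producing $\mathbf{p}^{j+1)}$ from the narrative: the $i_{j+1}$-th block of the updated word equals $\sum_{j'=j+1}^s a^{j+1)}_{j',i_{j+1}} c_{j'} + e_{i_{j+1}}$, a $C_{j+1}$-codeword perturbed by an error of weight at most $\tau_{j+1}$, so $LDC_{j+1}$ returns it in its list. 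The well-definedness of each elimination (nonvanishing of the pivots $a^{k)}_{k,i_k}$) is already guaranteed by the non-singular by columns property. After $s$ steps the correct blocks $u_{i_1},\ldots,u_{i_s}$ are available along this branch, inverting $A(i_1,\ldots,i_s)$ recovers $(c_1,\ldots,c_s)$ and hence $\mathbf{c}=[c_1 \cdots c_s]\cdot A$, and since $wt(\mathbf{c}-\mathbf{p'}) = wt(\mathbf{e}) \le \tau$ the test in the algorithm succeeds and $\mathbf{c}\in Dec$.

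Next I would upgrade this to the exact characterization $L = \{\mathbf{c}\in C \mid wt(\mathbf{p}-\mathbf{c})\le \tau\}$. Soundness (that $Dec$ contains only such codewords) is immediate from the explicit test $wt(\mathbf{p}-\mathbf{p'})\le\tau$: every word added to $Dec$ has the form $[c_1 \cdots c_s]\cdot A \in C$ and lies within distance $\tau$ of the received word $\mathbf{p'}$. For the reverse inclusion, the key observation is that the argument of the previous paragraph never used that $\mathbf{c}$ was the transmitted word---it used only $wt(\mathbf{p}-\mathbf{c})\le\tau$. Thus for an arbitrary codeword $\mathbf{c}'\in C$ with $wt(\mathbf{p}-\mathbf{c}')\le\tau$, applying Theorem \ref{THAlgDecPlotSum} to the error $\mathbf{e}'=\mathbf{p}-\mathbf{c}'$ yields a good index set for $\mathbf{c}'$, and the same trace shows $\mathbf{c}'\in Dec$; combined with soundness this gives the claimed equality.

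The main obstacle I anticipate is making the inductive invariant precise in the presence of branching: the lists $L_j$ may contain several candidates, so one must argue that a single distinguished branch (the one repeatedly selecting the genuine block value) threads through all $s$ levels, while the spurious branches are harmless thanks to the final weight test. Matching the bookkeeping variables of Algorithm \ref{alg:dec} (the sets $U$, $U'$ and the running matrix $A$) to the conceptual vectors $\mathbf{p}^{k)}$ and matrices $A^{k)}$ of the pre-theorem discussion---in particular verifying that the column operations applied to $A$ keep the pivots and the coefficients $a^{j)}_{j',i}$ consistent across iterations---is the book-keeping that needs care, but it becomes routine once this correspondence is fixed.
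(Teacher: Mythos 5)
Your proposal is correct and follows the same route as the paper's (much terser) proof: invoke Theorem \ref{THAlgDecPlotSum} to obtain a good index set, trace the branch of Algorithm \ref{alg:dec} that selects the genuine block at each level so that $\mathbf{c}$ survives the final weight test, and observe that the identical argument applies to any codeword within distance $\tau$ of $\mathbf{p}$, which together with the explicit test in line 26 gives the stated equality for $L$. You have simply made explicit the induction and the branching bookkeeping that the paper leaves implicit.
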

\begin{proof}
By $\ref{THAlgDecPlotSum}$, there exists an ordered subset $\{i_1,\ldots , i_s\} \subset \{
1,\ldots, l\}$ satisfying $wt(e_{i_j})\le \tau_j $, for all $j \in \{1,\ldots, s\}$. Therefore, $c_{i_j}\in L_j$ and $\mathbf{c}=(\sum_{j=1}^s a_{j,1} c_j, \ldots , \sum_{j=1}^s a_{j,l} c_j )\in L$. Furthermore, all the words at distance $\tau$ from the received word are included in the output list as well.
\end{proof}

\begin{example}\label{ex:ejemplo1}

Consider the matrix-product codes with matrix $A$ of the form 
$$
A=\left(\begin{matrix}
1 & 1 \\
0 & 1
\end{matrix}\right),
$$
and  $C_1 \supset C_2$ Reed-Solomon Codes over $\mathbb{F}_{16}$ with parameters $[15,10,6]$
and $[15,4,12]$, respectively. Therefore, the code $C=[C_1C_2]\cdot A$ has parameters $[30,14,12]$.  We have error bounds $\tau_1=3, \tau_2=7$, for $C_1$ and $C_2$ respectively, using the list decoding algorithm in \cite{peter} or in \cite{Lee-Michael} with multiplicity $v=4$ (see next section for further details). Therefore, the error bound for Algorithm \ref{alg:dec} is $\tau=7$. Note that the error correction capability of $C$ with \cite{hlr} is only $t=5$.

\item Let $ \mathbf{c}=(0,0)$ be the sent word and $\mathbf{p}=(\alpha^{2}x+\alpha x^5+\alpha^5 x^6+\alpha^{14} x^{13},\alpha^5 x^2+\alpha^7 x^6+\alpha^8 x^{10})$ the received word, i.e. $wt(\mathbf{e})=7$. 
\begin{itemize}
\item We  consider the ordered set of indices $\{1,2\}$. So, we decode $p_1 =\alpha^{2}x+\alpha x^5+\alpha^5 x^6+\alpha^{14} x^{13}$ with the list decoding algorithm for $C_1$: we obtain $p_1^{2)} =\alpha^{2}x+\alpha x^5+\alpha^5 x^6+\alpha^{14} x^{7}+\alpha^{10} x^{13}+\alpha^{5} x^{14}$.

Then we compute $p_2^{2)}=p_2-p_1^{2)}=\alpha^{2}x+\alpha^5 x^2+\alpha x^5+\alpha^5 x^6+\alpha x^{7}+\alpha^8 x^{10}+\alpha^{10} x^{13}+\alpha^{5} x^{14}$ and decode it with the list decoding algorithm for $C_2$. However, we get an empty list as output and we do not consider any codeword for the final list.

\item We consider now the ordered set of indices $\{2,1 \}$. Therefore, we decode $p_2=\alpha^5 x^2+\alpha^7 x^6+\alpha^8 x^{10}$ with the list decoding algorithm for $C_1$. We obtain as output $p_2^{2)}=0$. Thus, we compute $p_1-p_2^{2)}=p_1$ and we decode it with the list decoding algorithm for $C_2$. We have obtain $0$. Therefore, we deduce that the sent codeword is $(0,0)$.

\end {itemize}
\end{example}

\begin{remark}\label{re:GR}
Matrix-product codes are generalized concatenated codes \cite{Blackmore-Norton}. There is an efficient decoding algorithm for generalized concatenated codes \cite{Dumer}, the cascaded decoding of multilevel concatenations. In \cite{Guruswami-Rudra}, Guruswami and Rudra generalize this algorithm to a list decoding algorithm for generalized concatenated codes. 
 
The latter are defined as follows: consider $s$ outer codes, say $C^j_{out}$ over $\mathbb{F}_{q^{a_j}}$ with parameters $(N,K_j,D_j)$ for $j=0,\ldots,s-1$. Let $C_{out}=C_{out}^0\times\cdots\times C_{out}^{s-1}=\{(c^0,\ldots,c^{s-1})\mid c^{j}\in C_{out}^j, j=0,\ldots,s-1\}$, understanding $c^j\in C_{out}^j$
as a row vector. Thus, a typical element $c\in C_{out}$ is a $s\times N$ matrix, we denote by $c_k$  the $k$-th column of $c$, for $k=0,\ldots,N-1$.

One also considers a inner code $C_{in}$ over $\mathbb{F}_q$ and a
one-to-one  map $\psi$ from $\mathbb{F}_{q^{a_0}}\times\cdots\times\mathbb{F}_{q^{a_{s-1}}}$ to $C_{in}$
that maps $(i_0,\ldots,i_{s-1})\in \mathbb{F}_{q^{a_0}}\times\cdots\times\mathbb{F}_{q^{a_{s-1}}}$ to a codeword $\psi(i_0,\ldots,i_{s-1})$ in $C_{in}$.

A generalized concatenated code $V$ of order $s$ is the set
$$
V=\{\psi(c_0),\ldots,\psi(c_{N-1})\mid (c_0,\ldots,c_{N-1})\in B\}.
$$

Guruswami and Rudras's algorithm \cite{Guruswami-Rudra} works as follows (we do not attempt to write the algorithm, just to describe its main idea): let $R\in \mathcal{M}(a_0+\cdots+a_{s-1}\times N,\mathbb{F}_q)$ be the received word. For $j=1,\ldots, s-1$ consider the code $C_{in}^j$ generated by all the rows of the generator matrix in $C_{in}$ except the first $a_0+\cdots+a_{j-1}$, say $G_{in}^j$, where  $C_{in}^0=C_{in}$. The algorithm assumes the existence of list decodable algorithms for $C_{in}^j$ and list recovery algorithm for $C_{out}^j$. It is also assumed that the list decoding algorithm for $C_{in}^j$ returns a list of messages while the list recovery
algorithm for $C_{out}^j$ returns a list of codewords.

In the first round one applies, for $i=0,\ldots,N-1$,  a list decoding algorithm of $C_{in}^0$ to $\psi(c_i)$ obtaining a list $S_i^0$. For $i=0,\ldots,N-1$ and for each element in $S_i^0$, one can recover a message $\tilde{c_i}$ and project it into the first component, obtaining a list $T_i^0$. Then it applies list recovery algorithm of $C_{out}^0$ to $\{T_i^0\}_i$ obtaining a list $L_0$ which in particular contains $c^0$.

In the second round we proceed as follows, for each $c\in L_0$, $c\in\mathcal{M}(a_0\times N,\mathbb{F}_q)$, consider the matrix $\hat{c}\in\mathcal{M}(a_0+\cdots+a_{s-1}\times N,\mathbb{F}_q)$ which is the matrix of zeroes with $c$  in the first $a_0$ rows. One may subtract $R=R-(G_{in}^0)^t\hat{c}$, i.e., if $c=c^0$ we are cancelling  it from the received word. Thus we have a new received word $R$ that should be decode with the generalized concatenated code with outer codes $C_{out}^1,\ldots,C_{out}^{s-1}$ and inner code $C_{in}^1$. Since the number of outer codes has dropped by one, repeating this process $s$-times one can successfully list decode the original generalized concatenated code.

Let $C_{in}$ be a linear code over $\mathbb{F}_q$ with parameters $[n,k,d]$ and generator matrix $A$, and  let $C^j_{out}$ be a linear code over $\mathbb{F}_q$ with parameters $[N,K_j,D_j]$ for $j=0,\ldots,k-1$. Consider $\psi$ the encoding linear map of $C_{in}$, i.e., $\psi(i_0,\ldots,i_{k-1})=(i_0,\ldots,i_{k-1})A$. Then the generalized concatenated code $V$ is equal to the matrix product code $[C^1_{out},\ldots,C^s_{out}]\cdot A$. Since a matrix-product code $[C_1,\ldots,C_s]\cdot A$, with $A$ non-singular by columns, is a generalized concatenated code   one might use the algorithm in \cite{Guruswami-Rudra} for matrix-product codes as well.  However, the algorithm in \cite{Guruswami-Rudra} can not be successfully applied for matrix-product codes because the inner code has generator matrix $A$, that is a small matrix (in practice).
\end{remark}

Theorem \ref{THAlgDecPlotSum} guarantees the existence of a \emph{good} set of indices, i.e., 
 satisfying $wt(e_{i_j})\le \tau_j $, for all $j \in \{1,\ldots, s\}$. Hence, in the worst case, we may have to consider  $s!\binom{\ell}{s}$ iterations. However, in average we will consider much fewer iterations. Given a fix set of ordered indices $S=\{i_1,\ldots,i_s\}$, we will estimate for how many error patterns of weight $\tau$, one has a  \emph{good} set of indices. In other words, what is the probability that a set of indices $\{i_1,\ldots,i_s\}$ is a \emph{good} in the worst case, i.e. when $\tau$ errors occur.

\begin{proposition}
If $\tau$ errors occur the probability that a fix set of indices $\{i_1,\ldots,i_s\}$ verifies that 
$wt(e_{i_j}) \le \tau_j ~ \mathrm{for~all}~j \in \{ 1,\ldots, s\}$  is:
\small
$$
\frac{\sum_{a_1=0}^{\tau_1}\sum_{a_2=0}^{min\{\tau-a_1,\tau_2\}}\cdots\sum_{a_{s-1}=0}^{min\{\tau-\sum_{j=1}^{s-2}a_j,\tau_{s-1}\}}  \binom{m}{a_1}\binom{m}{a_2}\cdots\binom{m}{a_{s-1}}\binom{m(\ell-s+1)}{\tau-\sum_{j=1}^{s-1}a_j}}{\binom{m\ell}{\tau}}
$$
\end{proposition}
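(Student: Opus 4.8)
The plan is to reduce the statement to a counting problem over error supports and then evaluate the resulting nested sum by Vandermonde's convolution. First I would fix the probabilistic model implicit in the denominator $\binom{m\ell}{\tau}$: a weight-$\tau$ error vector is determined by a support (a $\tau$-subset of the $m\ell$ coordinates) together with an assignment of nonzero values. Since every support admits exactly $(q-1)^\tau$ value assignments, this factor is common to the favourable count and to the total count, so it cancels and the probability depends only on how the support distributes among the $\ell$ blocks of length $m$. Choosing the support uniformly among the $\binom{m\ell}{\tau}$ possibilities thus gives the denominator, and it remains to count supports for which $wt(e_{i_j})\le\tau_j$ for $j=1,\dots,s$.

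Next I would count these favourable supports by conditioning on the per-block weights. Writing $a_j=wt(e_{i_j})$ for the number of error positions in the block indexed by $i_j$, the good-set event is exactly $\{a_j\le\tau_j \text{ for } j=1,\dots,s\}$. For a fixed tuple $(a_1,\dots,a_s)$ the positions inside each constrained block are chosen in $\binom{m}{a_j}$ ways, while the remaining $\tau-\sum_{j=1}^{s}a_j$ error positions must lie among the $m(\ell-s)$ coordinates of the $\ell-s$ unconstrained blocks, contributing $\binom{m(\ell-s)}{\tau-\sum_{j=1}^{s}a_j}$. Summing over all admissible tuples, where each $a_j$ runs from $0$ to $\min\{\tau_j,\ \tau-\sum_{k<j}a_k\}$ (the second term enforcing $\sum_j a_j\le\tau$), expresses the favourable count as a nested sum carrying one factor $\binom{m}{a_j}$ per constrained block.

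The decisive step is to collapse the innermost summation over $a_s$, fusing the last constrained block with the $\ell-s$ unconstrained ones into the single factor $\binom{m(\ell-s+1)}{\tau-\sum_{j=1}^{s-1}a_j}$ of the statement. I would invoke Vandermonde's identity $\sum_{a_s}\binom{m}{a_s}\binom{m(\ell-s)}{n-a_s}=\binom{m(\ell-s+1)}{n}$ with $n=\tau-\sum_{j=1}^{s-1}a_j$. The main obstacle is precisely the cap $a_s\le\tau_s$: Vandermonde requires $a_s$ to range over its full interval $0\le a_s\le n$, whereas the good-set condition truncates it at $\tau_s$. Here I would appeal to the definition of $\tau$ in (\ref{TauValue}), which yields $\tau\le(\ell-s+1)\tau_s+(\ell-s)$; combined with $n\le\tau$, the cap is non-binding exactly in the regime $\tau\le\tau_s$, which is the optimal case $\ell=s$ singled out in Section \ref{BigTau}. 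In that case the collapse is an exact identity and the displayed ratio is the probability on the nose; when $\ell>s$ one drops the cap, which only enlarges the count, so the expression becomes an upper bound for the probability, consistent with the estimating use made of it afterwards. Dividing the favourable count by $\binom{m\ell}{\tau}$ then produces the claimed formula.
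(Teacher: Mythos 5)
Your argument is, at its core, the same direct counting the paper uses: both observe that the $(q-1)^{\tau}$ value assignments cancel so only supports matter, both condition on the number $a_j$ of error positions falling in the designated blocks, and both divide the favourable count by $\binom{m\ell}{\tau}$. The genuine difference is the treatment of the last designated block $i_s$, and there your version is the more careful one. The paper's proof constrains only blocks $i_1,\dots,i_{s-1}$ and then lumps block $i_s$ in with the $\ell-s$ free blocks, counting $\binom{m(\ell-s+1)}{\tau-\sum_{j=1}^{s-1}a_j}$ placements with no mention of the condition $wt(e_{i_s})\le\tau_s$; this reproduces the displayed formula but silently drops one of the $s$ constraints defining a \emph{good} set. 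You instead keep all $s$ caps and recover the displayed expression by collapsing the sum over $a_s$ via Vandermonde's identity, which forces the issue into the open: the collapse is exact precisely when the cap $a_s\le\tau_s$ is non-binding, i.e.\ when $\tau\le\tau_s$ (which holds in particular for $\ell=s$, since then (\ref{TauValue}) gives $\tau\le\tau_s$, as in Example \ref{ex:ejemplo1}), and otherwise dropping the cap only enlarges the numerator, so the stated ratio is an upper bound on the probability rather than the probability itself. This is a correct and worthwhile refinement: the proposition as printed is exact only under the hypothesis $\tau\le\tau_s$, and your proof makes that visible where the paper's does not. One small imprecision: the non-binding regime is $\tau\le\tau_s$, which is implied by $\ell=s$ but can also occur for $\ell>s$ when the minimum in (\ref{TauValue}) is attained at an earlier term, so it is not ``exactly'' the case $\ell=s$.
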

\begin{proof}
Let us compute the different error vectors $\textbf{e}$ with weight $\tau$ that allow us to have a \emph{good} set of indices. If $wt(e_{i_1}) \le \tau_1$, then we may have $0\leq a_1\leq \tau_1$ errors in block $i_1$, and there are  $\binom{m}{a_1}$ possibilities for having $a_1$ errors in block $i_1$. Assuming that $a_1$ errors occurred in the block $i_1$, we may have  $wt(e_{i_2}) \le \tau_2$ if and only if there are $a_2$ errors in the block $i_2$, where $0\leq a_2\leq min\{\tau-a_1,\tau_2\}$, since $wt(\textbf{e})=\tau$. Hence, there are $\binom{m}{a_2}$ different possibilities for having $a_2$ errors in the second block. Repeating this argument for the first $s-1$ blocks, we may have $\tau-\sum_{j=1}^{s-1}a_j$ errors in the  $\ell-s+1$ remaining blocks (including block $i_s$) and therefore there are  $\binom{m(\ell-s+1)}{\tau-\sum_{j=1}^{s-1}a_j}$ possibilities for the remaining blocks. Overall we have $$\sum_{a_1=0}^{\tau_1}\sum_{a_2=0}^{min\{\tau-a_1,\tau_2\}}\cdots\sum_{a_{s-1}=0}^{min\{\tau-\sum_{j=1}^{s-2}a_j,\tau_{s-1}\}}  \binom{m}{a_1}\binom{m}{a_2}\cdots\binom{m}{a_{s-1}}\binom{m(\ell-s+1)}{\tau-\sum_{j=1}^{s-1}a_j}$$
error patterns that make  $\{i_1,\ldots,i_s\}$ a \emph{good} set of indices. The result holds since there are $\binom{m\ell}{\tau}$ error vectors of weight $\tau$. 	
\end{proof}

In Example \ref{ex:ejemplo1} we have $s=l=2$ and $\tau_1=3$.  Therefore the probability that either $\{1,2\}$ or $\{2,1\}$
is a good set of indices is
$$
\frac{\binom{15}{0}\binom{15}{7}+\binom{15}{1}\binom{15}{6}+\binom{15}{2}\binom{15}{5}
+\binom{15}{3}\binom{15}{4}}{\binom{30}{7}}=\frac{1}{2}.
$$

Finally we consider the complexity of alogithm \ref{alg:dec}.

\begin{theorem}\label{th:complexity}
Let $LDC_1,\ldots,LDC_s$ be the list decoding algorithms considered in Algorithm  \ref{alg:dec} with error bounds $\tau_1,\ldots,\tau_s$, respectively, and that output a list with size bounded by $D_1, \ldots, D_s$. We denote by $R_i$ the complexity of the algorithm $LDC_i$. Then, algorithm \ref{alg:dec} has complexity $$O\left(s!\binom{\ell}{s} (D_1 + \sum_{i=2}^s (\prod_{j=1}^{i-1} D_j) R_i)	  \right).$$
\end{theorem}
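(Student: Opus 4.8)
The plan is to bound the running time by walking through the three nested loops of Algorithm~\ref{alg:dec} from the outside inward and multiplying the per-iteration costs. The outermost \textbf{for} loop (line~2) ranges over all ordered subsets $\{i_1,\ldots,i_s\}$ of $\{1,\ldots,\ell\}$; as observed just before the statement there are $\binom{\ell}{s}$ unordered choices and $s!$ orderings of each, hence $s!\binom{\ell}{s}$ iterations. Everything else I would bound for a single fixed choice of indices and then multiply by this factor.

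Fixing the indices, the quantity to control is the size of the working set $U$ in the loop of line~4. Initially $U'=\{\mathbf{p}\}$ has size~$1$; during the pass for level $j$ each $\mathbf{u}\in U$ is decoded by $LDC_j$, producing a list of size at most $D_j$, and each list element spawns exactly one new element of $U'$ (lines~8--13), so $|U'|\le D_j\,|U|$. An immediate induction then gives $|U|\le \prod_{j=1}^{i-1}D_j$ at the start of level $i$ (empty product $=1$ for $i=1$). Consequently, at level $i$ we make at most $\prod_{j=1}^{i-1}D_j$ calls to $LDC_i$ at cost $R_i$ each, while the elimination of lines~9--13 and the column updates of lines~19--21 cost only $O(ms)$ and are dominated by the decoding; the work at level $i$ is thus $O\!\left(\bigl(\prod_{j=1}^{i-1}D_j\bigr)R_i\right)$, the $i=1$ case being the single call that contributes the leading term.

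Summing over $i=1,\ldots,s$ gives the per-index-set cost $\sum_{i=1}^{s}\bigl(\prod_{j=1}^{i-1}D_j\bigr)R_i$. To finish I would check that the closing block (lines~24--31) does not dominate: it iterates over $U'$, of size at most $\prod_{j=1}^{s}D_j$, solving one $s\times s$ linear system and re-encoding one length-$m\ell$ codeword per candidate, which is polynomial in $s,m,\ell$ and absorbed by the decoding cost. Multiplying the per-index-set total by $s!\binom{\ell}{s}$ then yields the stated bound.

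The main obstacle is the accounting of the multiplicative branching: one must argue that the successive list-decoding outputs compose exactly as $\prod_{j=1}^{i-1}D_j$ with no double counting, that the \emph{break} on an empty list in lines~16--18 can only decrease this count, and that every piece of linear-algebra bookkeeping is genuinely of lower order than the $R_i$, so that only the decoder-cost terms survive in the final $O(\cdot)$ estimate.
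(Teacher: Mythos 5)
Your proof is correct and follows essentially the same route as the paper's: enumerate the $s!\binom{\ell}{s}$ ordered index sets and, for each one, bound the number of calls to $LDC_i$ at level $i$ by $\prod_{j=1}^{i-1}D_j$ via the multiplicative growth of the candidate set, then sum the decoder costs. (Your leading term is $R_1$ where the stated bound has $D_1$; your reading is the more natural accounting of the first decoding call, and the rest of your bookkeeping of lower-order steps only makes explicit what the paper's terser argument leaves implicit.)
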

\begin{proof}
In the worst case, one should  consider every ordered set of $s$ elements within the $l$ possible indices, that is $s!\binom{\ell}{s}$  ordered sets. For a fix ordered set, we run $LDC_1$ which yields a list of size at most $D_1$, in the worst case. For each element in this list we run $LDC_2$ producing a list of size $D_2$, in the worst case. Hence we will have $D_1 D_2$ words that should be decoded with $LDC_3$. Repeating this process, in the worst case, we will decode $D_1 \cdots D_{s-1}$ words with $LDC_{s}$. 
\end{proof}

\section{List decoding of Matrix-product codes from Reed-Solomon codes with small $s$}\label{se:rs}
The previous algorithm can become computationally intensive as the number of blocks $l$,
the number of blocks that we may need to decode at each iteration $s$ and the error bounds $\tau_1,
\ldots , \tau_s$, increase. Therefore, there are two interesting situations:
considering few blocks and considering codes and error bounds such that there is a small
probability of getting a list with more that one element as output of the list decoding algorithms $LDC_i$.

Let us consider that the constituent codes $C_1 \supset \cdots \supset C_s$ are Reed-Solomon. This family of codes is especially interesting in this setting for two reasons, consider an $[m,k,d]$ Reed-Solomon code, there is an efficient list-decoding algorithm \cite{Guruswami-Sudan} for decoding up to $\tau^{v}$ errors, with multiplicity $v \in \mathbb{N}$, which is computed as follows

 $$\tau^{v} = m - \left \lfloor \frac{l_v}{v} \right \rfloor -1,\mathrm{~where}$$ $$l_v= \left \lfloor \frac{m\binom{v+1}{2}}{r_v} + \frac{(r_v -1)(k-1)}{2} \right \rfloor\mathrm{~and}~r_v\mathrm{~is~calculated~so~that}$$  $$\binom{r_v}{2} \le \frac{m\binom{v+1}{2}}{k-1} < \binom{r_{v} +1}{2}.$$

In particular, one has the algorithm in \cite{Lee-Michael} with complexity $O(D^4vm^2)$ and the one in \cite{peter} with complexity $O(D^4  v m \log^2 m \log \log m)$, where $D$ is the list size, $v$ the multiplicity and  $m$  the length of the code. Notice that, by fixing $v$, we fix $\tau^{v}$ and  bound the list size $D\leq l_v/(k-1)$. Thus, one may obtain the complexity of algorithm \ref{alg:dec} as a function of the multiplicities and the length of the constituent Reed-Solomon codes by Theorem \ref{th:complexity}.

Furthermore, one has a bound for the the probability $p_{\tau^{v}} (C)$ that the output of Guruswami-Sudan's algorithm for the code $C$ with error bound $\tau^{v}$ has more than $1$ codeword \cite{Tom}, given that at most $\tau^{v}$ errors have occurred and assuming that all the error patterns have the same probability. It turns out that this probability may be very small in practice, for example a Reed-Solomon code over $\mathbb{F}_{2^6}$ with parameters $[64,20,45]$ and $\tau^{1}=23$, this probability is $10^{-25}$.

We consider codes and error bounds in such a way that $p_{\tau_i} (C_i)$ is small, we abbreviate $\tau_i^{v_i}$ to $\tau_i$. With the notation of the previous section, consider a received word $\mathbf{p} = \mathbf{c} + \mathbf{e}$ where $\mathbf{e}$ is the error vector with $wt(\mathbf{e}) \le \tau$. Consider the ordered set of indices $\{i_1, \ldots , i_s\} \subset \{1, \ldots , l\}$, if $wt(e_{i_j}) \le \tau_j$ for every $j$ then we say that the set of indices is \emph{good} (otherwise we say that it is \emph{bad}). For a  \emph{good} set of of indices, the sent word $\mathbf{c}$ is in the output list by Theorem \ref{THAlgDecPlotSum}. Furthermore, we claim that with a high probability the output list just contains this word: $LDC_1 (p_{i_1})$ is going to give as output a list containing $\sum_{j=1}^s a_{j,1} c_j$, in Algorithm \ref{alg:dec}. In practice this list will only have one element since the probability of getting just one codeword is $1- p_{\tau_1} (C_1)$. Then we eliminate $c_1$ in the block $i_2$ and decode it using $LDC_2$. Since $w(e_{i_2}) \le \tau_2$ we obtain a list that contains $\sum_{j=2}^s a^{2)}_{j,i_2} c_j$  and with high probability this list has only one codeword. We proceed in the same way for the rest of the blocks and with probability
\begin{equation}\label{fo:prob}\prod_{i=1}^s (1-p_{\tau_i} (C_i))\end{equation} we obtain an output with just one codeword for this set of indices.

Consider now a  \emph{bad} set of indices $\{ i_1, \ldots , i_s \}$, that is, there exists $j$ such that
$w(e_{i_1})<\tau_1, \ldots, w(e_{i_{j-1}})<\tau_{j-1}$, but $w(e_{i_j})<\tau_j$,
then the block $j$ will not be correctly decoded. Again, with probability (\ref{fo:prob}) we will obtain at most one codeword $\mathbf{p}'$ for this set of indices, we do not know anything about this codeword excepting that it is not the sent one. However, we claim that with a high probability the codewords obtained with this \emph{bad} set of indices will be discarded in line 26 of Algorithm \ref{alg:dec}, namely, we claim that $wt(\mathbf{p}-\mathbf{p}')> \tau$ with at least probability $1-lp_{\tau_1} (C_1)$.

\begin{lemma}\label{Lema:bound}
Let $\mathbf{p},\mathbf{p}' \in C=[C_1 \cdots C_s] \cdot A$, with $\mathbf{p} \neq \mathbf{p}'$ and $C_1 \supset \cdots \supset C_s$ Reed-Solomon codes. For $\tau$ as in (\ref{TauValue}), we have$$P(wt(\mathbf{p}-\mathbf{p}')<\tau) \le l p_{\tau_1} (C_1).$$
\end{lemma}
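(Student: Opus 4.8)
The plan is to bound the bad event by a union bound over the $l$ blocks, each block contributing the single-code list-decoding failure probability $p_{\tau_1}(C_1)$; this is exactly where the factor $l$ will come from. First I would make the probabilistic model explicit: the word $\mathbf{p}'$ tested against the candidate $\mathbf{p}$ in line 26 of Algorithm \ref{alg:dec} is the received word $\mathbf{r}=\mathbf{c}+\mathbf{e}$, where $\mathbf{c}\in C$ is the transmitted codeword and $\mathbf{e}$ is the error, drawn uniformly among vectors with $wt(\mathbf{e})\le\tau$. Writing $\mathbf{q}:=\mathbf{p}-\mathbf{c}\in C\setminus\{\mathbf{0}\}$, the quantity compared is $wt(\mathbf{p}-\mathbf{r})=wt(\mathbf{q}-\mathbf{e})$, so the content of the lemma is the estimate $P_{\mathbf{e}}\big(wt(\mathbf{q}-\mathbf{e})<\tau\big)\le l\,p_{\tau_1}(C_1)$. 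The key structural input is that, since $C_1\supset\cdots\supset C_s$, every block $q_i=\sum_{j}a_{j,i}b_j$ of $\mathbf{q}=[b_1\cdots b_s]\cdot A$ is a codeword of $C_1$; thus each block of $\mathbf{r}$ is a $C_1$-codeword corrupted by $e_i$, which is what lets the single-code probability $p_{\tau_1}(C_1)$ enter block by block.

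For $i=1,\dots,l$ let $A_i$ be the event that the $i$-th received block $r_i$ has two distinct codewords of $C_1$ within distance $\tau_1$, i.e.\ that list decoding $C_1$ at radius $\tau_1$ on $r_i$ is ambiguous. By the meaning of $p_{\tau_1}(C_1)$ and the estimate of \cite{Tom} I would argue $P(A_i)\le p_{\tau_1}(C_1)$ for every $i$ (here a little care is needed because the marginal law of $e_i$ is not the uniform law on weight $\le\tau_1$, so I would phrase the per-block comparison as a bound rather than an equality). I would then prove the containment $\{wt(\mathbf{q}-\mathbf{e})<\tau\}\subseteq\bigcup_{i=1}^{l}A_i$, after which the union bound gives $P_{\mathbf{e}}\big(wt(\mathbf{q}-\mathbf{e})<\tau\big)\le\sum_{i=1}^{l}P(A_i)\le l\,p_{\tau_1}(C_1)$, which is the claim.

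The heart of the proof is this containment, and it is the step I expect to be the main obstacle. I would argue by contradiction: assume $wt(\mathbf{q}-\mathbf{e})<\tau$ yet no block is ambiguous, so that for every $i$ with $q_i\neq\mathbf{0}$ one has $wt(e_i)>\tau_1$ or $wt(q_i-e_i)>\tau_1$ (a block with $q_i=\mathbf{0}$ can never produce a second $C_1$-codeword). Two facts must then be combined. First, because $A$ is non-singular by columns and the $C_i$ are nested, the number of nonzero blocks of $\mathbf{q}$ is at least $l-t+1$, where $t$ is the last index with $b_t\neq\mathbf{0}$; this is the same MDS-type count underlying $d(C)=\min_j(l-j+1)d_j$ in \cite{Ozbudak,hlr}. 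Second, for each nonzero block the triangle inequality gives $wt(e_i)+wt(q_i-e_i)\ge wt(q_i)\ge d_1$. Feeding these into the budgets $\sum_i wt(e_i)\le\tau$ and $\sum_i wt(q_i-e_i)<\tau$, together with the defining minimum $\tau\le(l-t+1)\tau_t+(l-t)$, should force a surplus contradicting the assumption that every nonzero block is $e$-heavy or $(q-e)$-heavy, in the spirit of the greedy counting of Theorem \ref{THAlgDecPlotSum}. The delicate point—and the one I would spend the most care on—is to locate a single nonzero index $i$ on which \emph{both} $wt(e_i)\le\tau_1$ and $wt(q_i-e_i)\le\tau_1$ hold, so that $r_i$ lies within $\tau_1$ of the two distinct $C_1$-codewords $c_i$ and $c_i+q_i$; separately finding an $e$-light block and a $(q-e)$-light block is easy, but matching them at the same nonzero index, and with the sharp threshold $\tau_1$ rather than $2\tau_1$, is what makes the containment nontrivial and what genuinely uses the non-singular-by-columns block count. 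Such ambiguous blocks can occur at all precisely because $\tau_1$ exceeds $d_1/2$, so $C_1$ possesses nonzero codewords of weight at most $2\tau_1$.
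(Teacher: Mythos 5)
There is a genuine gap. Your union-bound skeleton (bounding the bad event by $l$ per-block events, each of probability $p_{\tau_1}(C_1)$) does match the paper, but the containment you reduce everything to is both left unproven and, as you have formulated it, false. You require a single nonzero block $i$ of $\mathbf{q}=\mathbf{p}-\mathbf{c}$ on which \emph{both} $wt(e_i)\le\tau_1$ and $wt(q_i-e_i)\le\tau_1$ hold. Take $s=l=2$ as in Example \ref{ex:ejemplo1} ($\tau_1=3$, $\tau=7$, $d_1=6$): choose $b_1\in C_1$ of weight $6$, $b_2=0$, so $\mathbf{q}=(b_1,b_1)$, and let $\mathbf{e}=(0,b_1)$. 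Then $wt(\mathbf{e})=6\le\tau$ and $wt(\mathbf{q}-\mathbf{e})=wt(b_1)=6<\tau$, yet block $1$ has residual weight $6>\tau_1$ and block $2$ has error weight $6>\tau_1$; no block is ambiguous at radius $\tau_1$. So the event you are trying to cover is not contained in $\bigcup_i A_i$, and no amount of care with the MDS-type block count or the triangle inequality will rescue it. (The root of the trouble is that you reinterpreted $\mathbf{p}'$ as the received word $\mathbf{c}+\mathbf{e}$; the lemma as stated takes both $\mathbf{p}$ and $\mathbf{p}'$ to be codewords of $C$.)

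The paper's proof never touches the error vector and needs none of this machinery. Since $C_1\supset\cdots\supset C_s$, every block of a codeword of $C$ lies in $C_1$, so $p_i,p'_i\in C_1$ for all $i$, and $wt(\mathbf{p}-\mathbf{p}')=\sum_{i=1}^l wt(p_i-p'_i)$. If $wt(\mathbf{p}-\mathbf{p}')<\tau\le l\tau_1+(l-1)$, then by pigeonhole some block satisfies $wt(p_j-p'_j)\le\tau/l<\tau_1+1$, i.e.\ $wt(p_j-p'_j)\le\tau_1$ — two $C_1$-codewords within distance $\tau_1$. A union bound over the $l$ blocks, with $P(wt(p_i-p'_i)<\tau_1)=p_{\tau_1}(C_1)$, gives the factor $l\,p_{\tau_1}(C_1)$. (The paper is itself somewhat informal here — it does not address blocks where $p_j=p'_j$, nor make the probability space precise, a concern you rightly flag — but the containment it uses is this one-line pigeonhole, not the simultaneous two-sided condition your argument hinges on.)
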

\begin{proof}

If $wt(\mathbf{p}-\mathbf{p}') \le \tau$ then there is $j$ such that$$wt(p_j -p'_j ) \le \tau/l  \le \tau/s \le \tau_1.$$ One has that $P (wt(p_i - p'_i) < \tau_1) = p_{\tau_1} (C_1)$, since  $\mathbf{p} \neq \mathbf{p}'$ and $p_i, p'_i \in C_1$. Thus,$$P(wt(\mathbf{p}-\mathbf{p}')<\tau) \le \sum_{i=1}^l  P (wt(p_i - p'_i) < \tau_1)= l  p_{\tau_1} (C_1)$$since $\mathbf{p} \neq \mathbf{p}'$ and $wt(\mathbf{p}-\mathbf{p}')= \sum_{i=1}^l wt(p_i - p'_i)$.
\end{proof}

An optimal situation arises considering $s=l=2$ and two Reed-Solomon codes $C_1 \supset C_2$
such that for error bounds $\tau_1$ and $\tau_2$, respectively, Guruswami-Sudan's Algorithm outputs a list
of at most 1 element with a high probability (if at most $\tau_1$, $\tau_2$, respectively, errors have occurred). When $l=s=2$, this construction gives the same family of codes as the $(u,u+v)$-construction,
for instance Reed-Muller codes are obtained in that way.

\section{Bounded Distance Decoding of Quasi-Cyclic Codes}\label{se:units}

Let  $C_1, \ldots, C_s \subset \mathbb{F}_q^m$ be cyclic codes of length $m$ and  $A=(a_{i,j})$ an $s\times l$-matrix, with $s\leq l$, whose entries are units in the ring $\fq[x]/(x^m -1)$ or zero. A unit in $\fq[x]/(x^m -1)$ is a polynomial of degree lower than $m$ whose greatest common divisor with $x^m -1$ is $1$. The so-called \emph{matrix-product code with polynomial units} is the set $C=[C_1 \cdots C_s] \cdot A$ of all matrix-products $[c_1 \cdots c_s] \cdot A$ where $c_i\in C_i \subset  \fq[x]/(x^m -1)$  for $i=1,\ldots, s$. These codes were introduced in \cite{hr}.

We consider always a special set of matrices $A$ to be defined below with full-rank over $\mathbb{F}_q[x]/(x^m-1)$. Let $C_i$ with parameters $[m,k_i,d_i]$, then the matrix-product code with polynomial units $C=[C_1 \cdots C_s] \cdot A$ has length $lm$ and dimension
$k=k_1+\cdots+k_s$.

Let $R_i= (a_{i,1},\ldots,a_{i,l})$ be the element of $(\fq [x]/(x^m -1))^l$ consisting of the $i$-th row of $A$, where $i=1,\ldots,s$.
  Let $C_{R_i}$, be the $\fq [x]/(x^m -1)$-submodule of $(\fq [x]/(x^m -1))^l$ generated by $R_1,\ldots, R_i$.
  In other words, $C_{R_i}$ is a linear code over a ring, and we denote by $D_i$ the minimum Hamming weight of the words of $C_{R_i}$, $D_i = \min \{ wt(x) \mid x \in C_{R_i} \}$. In \cite{hr} the following bound on the minimum distance was obtained

\begin{equation}\label{distancia2}
d(C)\geq d^\ast= \min\{d_1D_1,d_2D_2, \ldots ,d_s D_s\}.
\end{equation}

One of the differences between matrix-product codes and matrix-product codes with polynomial units is that the lower bound $d^\ast$  is not sharp for the latter class of codes.

The minimum distance can actually be much larger than $d^\ast$ and several codes with very good parameters were obtained in this way in \cite{hr}. We will provide a bounded distance decoding algorithm for these codes, using the list-decoding Algorithm \ref{alg:dec}.

Matrix-product codes with polynomial units are quasi-cyclic codes \cite{LF} of length $ml$. Although this family provides codes with very good parameters there are no general fast algorithms for decoding them.
The algorithm in \cite{hlr} for matrix-product codes may be used for these family of codes under certain hypothesis, but in that case it only corrects up to $\lfloor \frac{d^\ast -1}{2} \rfloor$.

We remark that the units of a ring form a multiplicative group, however they do not form an additive group. That is, if $f, g \in  \fq[x]/(x^m -1)$ are units, then $fg$ is a unit but $f+g$ or $f-g$ are not a unit in general. This phenomena will impose further restrictions for the list-decoding algorithm since we cannot divide by a non-unit.

\begin{definition}\label{de:uc}
Let $A$ be a $s\times l$ matrix, whose entries are units in the ring $\fq[x]/(x^m -1)$ or zero.
 Let $A_t$ be the matrix consisting of the first $t$ rows of $A$. For $1\leq j_1< \cdots < j_t\leq l$, we denote by $A(j_1,\ldots,j_t)$ the $t\times t$ matrix consisting of the columns $j_1,\ldots,j_t$ of $A_t$.

A matrix $A$ is \textbf{unit by columns} if the determinant of $A(j_1,\ldots,j_t)$ is a unit in $\fq[x]/(x^m -1)$ for each $1\leq t\leq s$ and $1\leq j_1< \cdots < j_t\leq l$. 
\end{definition}

In particular, a unit by column matrix is a non-singular by columns matrix.

Let $C$ be a matrix-product code with polynomial units i.e. $C=[C_1 \cdots C_s] \cdot A$, where Let $C_1 \supset \cdots \supset C_s$ and $A$ is a unit by columns matrix, in particular the elements of the first row of $A$ are non-zero. \\
 
 With the notation of section \ref{BigTau}, consider a received word $\mathbf{p} = \mathbf{c} + \mathbf{e}$ where $\mathbf{e}$ is the error vector with $wt(\mathbf{e}) \le \tau$.

For $s=1$, the definitions  of non-singular by column- and unit by column - matrix are the same. Namely, we can use Algorithm \ref{alg:dec} without any modifications: for $\{i_1\} \subset \{1, \ldots ,l \}$ a \emph{good} set of indices, we decode the block $p_{i_1}$ with $LDC_1$ because the cyclic codes generated by $f$ and by $f  u$, with $f \mid x^m -1$ and $\gcd (u,x^m-1)=1$, are the same code. Then we divide by $a_{1,i_1}$ to recover $c_1$ (in line 24), we can consider the inverse of $a_{1,i_1}$ since the entries of $A$ are units. Actually, this algorithm for $s=1$ is the list decoding version of the algorithm in \cite{Lally} for 1-generator 1-level quasi-cyclic codes.

For $s \ge 2$, for each elimination step in Algorithm \ref{alg:dec}, we are dividing by $a_{j,i_j}$ (in lines 11, 20), we claim that this can be performed because $a_{j,i_j}$ is a unit. Let $A^{k)}$ denote the matrix obtained recursively from $A$ by performing the following $l-(k-1)$ elementary column operations (see section \ref{BigTau}):
$$\mathrm{column}_{i}(A^{k)})= \mathrm{column}_{i}(A^{k-1)}) -
\frac{a^{k-1)}_{k-1,i}}{a^{k-1)}_{k-1,i_{k-1}}}\mathrm{column}_{i_{k-1}}(A^{k-1)}),$$
for each $i \notin \{ i_1,\ldots , i_{k-1} \}$. These operations introduce $l-(k-1)$ additional zero elements in the $k-1$-th row of $A^{k)}$ at each iteration. Hence the submatrix of $A^{k)}$ given by the first $k$ rows and the $i_1, \ldots , i_{k}$ columns, is a triangular matrix (in this case, a column permutation of a lower triangular matrix) whose determinant is $a^{k)}_{1,i_1} \cdots a^{k)}_{k,i_{k}}$. Since $A$ is unit by columns, this minor is a unit. Hence, $a^{k)}_{k,i_{k}}$ is a unit, since the units form a multiplicative group.

Thus, we have a list-decoding algorithm with error bound $\tau$ as in (\ref{TauValue}). Furthermore, we can use it also for unique decoding up to the capacity of the code if $\tau = \lfloor \frac{d(C) -1}{2} \rfloor$.

\begin{theorem}
Consider a matrix-product code with polynomial units $C=[C_1 \cdots C_s] \cdot A$, where $C_1 \supset \cdots \supset C_s$ and $A$ is a unit by columns matrix. Let $\tau =  \lfloor \frac{d(C) -1}{2} \rfloor$, then the list decoding Algorithm \ref{alg:dec} is a unique decoding algorithm for $C$.
\end{theorem}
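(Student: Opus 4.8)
The plan is to reduce the statement to the elementary sphere–packing bound, exactly as in the field case, once we know that Algorithm \ref{alg:dec} really does behave as a list decoder with error bound $\tau$ in the ring setting. First I would record that everything needed to run the algorithm for the matrix-product code with polynomial units has already been verified in the discussion preceding the theorem: because $A$ is unit by columns, every coefficient $a^{k)}_{k,i_k}$ met in the elimination steps (lines 11 and 20) is a unit of $\fq[x]/(x^m-1)$, so all the divisions are well defined; moreover the minor $A(i_1,\ldots,i_s)$ has unit determinant, hence is invertible over the ring and the system in line 24 has a unique solution $(c_1,\ldots,c_s)$. Each eliminated block $\sum_{k=j}^s a^{j)}_{k,i_j}c_k$ lies in $C_j$, since $C_j$ is a cyclic code, i.e. an ideal of $\fq[x]/(x^m-1)$, and is therefore closed under multiplication by the ring elements $a^{j)}_{k,i_j}$; consequently $LDC_j$ may legitimately be applied to it.

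Combining this with Theorem \ref{THAlgDecPlotSum} --- whose proof is purely a statement about the Hamming weights of the error blocks, and so is insensitive to the passage from a field to the ring $\fq[x]/(x^m-1)$ --- I would conclude, exactly as in Corollary \ref{cIsInL1}, that the output of the algorithm is precisely
$$
L=\{\mathbf{c}\in C \mid wt(\mathbf{p}-\mathbf{c})\le \tau\},
$$
which in particular contains the transmitted word whenever $wt(\mathbf{e})\le\tau$.

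The second and decisive step is to show that, under the hypothesis $\tau=\lfloor \frac{d(C)-1}{2}\rfloor$, this list has at most one element. I would argue by contradiction: if $\mathbf{c}_1\neq\mathbf{c}_2$ both belonged to $L$, the triangle inequality for the Hamming metric would give
$$
d(C)\le wt(\mathbf{c}_1-\mathbf{c}_2)\le wt(\mathbf{c}_1-\mathbf{p})+wt(\mathbf{p}-\mathbf{c}_2)\le 2\tau\le d(C)-1,
$$
which is absurd. Hence $\# L\le 1$, so the algorithm returns the true codeword and nothing else, i.e. it performs unique decoding up to the full error-correcting capability $\lfloor \frac{d(C)-1}{2}\rfloor$.

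The genuinely non-routine point, and the one I would emphasise in writing this up, is not the sphere-packing estimate (which is identical to the field case) but the transfer of the list-decoder guarantee to the ring setting: it is exactly the \emph{unit} determinant of every relevant minor --- rather than mere non-singularity --- that makes both the elimination steps and the final inversion in line 24 valid over $\fq[x]/(x^m-1)$, which is why \emph{unit by columns} is the correct hypothesis here. Once this is in place the conclusion is immediate; the only subtlety to flag is that the hypothesis $\tau=\lfloor \frac{d(C)-1}{2}\rfloor$ should be read as the requirement that the error bound $\tau$ of (\ref{TauValue}), obtained from the chosen constituent decoders $LDC_i$, be tuned to coincide with $\lfloor \frac{d(C)-1}{2}\rfloor$.
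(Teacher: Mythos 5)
Your proposal is correct and follows essentially the same route as the paper: first note that the unit-by-columns hypothesis makes all divisions and the final inversion in Algorithm \ref{alg:dec} valid over $\fq[x]/(x^m-1)$ (as established in the discussion preceding the theorem), then invoke Corollary \ref{cIsInL1} to get the list-decoding guarantee, and finally observe that $\tau=\lfloor\frac{d(C)-1}{2}\rfloor$ forces the output list to contain at most one codeword. You merely spell out the triangle-inequality step and the ideal-closure of the cyclic codes under multiplication by ring elements, which the paper leaves implicit.
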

\begin{proof}
We have seen above that Algorithm \ref{alg:dec} can be successfully applied in this setting. Hence, this algorithm is a list decoding algorithm, by Corollary \ref{cIsInL1}. In particular, if $wt(\mathbf{e}) \le \tau$ the sent word is in the output list. Moreover, since $\tau =  \lfloor \frac{d(C) -1}{2} \rfloor$ there is no other codeword at distance $\tau$ from the received word and the result holds.
\end{proof}

\begin{example}

Let $s=1$, $l=2$, and let $C_1$ be the Reed-Solomon code with parameters $[15,8,8]$ and generator polynomial $f=x^7 + \alpha^6x^6 + \alpha^{13}x^5 + \alpha^{12}x^4 + \alpha x^3 + \alpha^{10}x^2 + \alpha^{11}x + \alpha^{13}$, where $\alpha$ is a primitive root $\mathbb{F}_{16}$. Let $C=[C_1]\cdot A$, where $A=[1,x^4 + \alpha^5x^3 + \alpha x^2 + \alpha^{11}x + \alpha^{14}]$, with $\alpha\in\mathbb{F}_{16}$ a primitive element. One has that $C$ is a quasi-cyclic code with parameters $[30,8,19]$. 

The error correction capability of $C$ is $t=9$. However, with the algorithm in \cite{hlr} we can only decode up to $7$ errors, since $d^\ast=16$. Considering a list decoding algorithm with multiplicity $2$ for $C_1$, we have an error bound $\tau_1=4$. Hence the error correction capability of Algorithm \ref{alg:dec} is $\tau=2\tau_1+1=9$ and we have that it is a unique decoding algorithm for $C$.
\end{example}

\begin{example}

Let $s=1$, $l=2$, and let $C_1$ be the Reed-Solomon code with parameters $[15,5,11]$ and generator polynomial $f=x^{10} + \alpha^2 x^9 + \alpha^3 x^8 + \alpha^9 x^7 + \alpha^6 x^6 + \alpha^{14} x^5 + \alpha^2 x^4 + \alpha x^3 +
\alpha^6 x^2 + \alpha x + \alpha^{10}$, where $\alpha$ is a primitive element in $\mathbb{F}_{16}$. Let $C=[C_1]\cdot A$, where $A=[1,x^3 + \alpha^3 x^2 + \alpha^{14} x + \alpha^9]$. One has that $C$ is a quasi-cyclic code with parameters $[30,5,24]$, which is the best known code in \cite{MinT}.

The error correction capability of $C$ is $t=11$, however, with the algorithm in \cite{hlr} we can only decode up to $10$ errors, since $d^\ast=22$. Considering a list decoding algorithm for $C_1$ with multiplicity $v=1$, we have error bound $\tau_1=5$  and Algorithm \ref{alg:dec} decodes up to the half of the minimum distance since its correction capability is $\tau=2\tau_1+1=11$. However, considering a list decoding algorithm for $C_1$ with multiplicity $v=8$, we have error bound $\tau_1=7$. Hence, the error bound for Algorithm \ref{alg:dec} is $\tau=2\tau_1+1=15$, which is a list decoding algorithm for $C$.
\end{example}

\begin{example}
Let $s=l=2$ and consider a matrix $A$ of the form 
$$
A=\left(\begin{matrix}
1 & g \\
0 & 1
\end{matrix}\right),
$$
with $g$ a unit in $\mathbb{F}_2[x]/(x^m -1)$. One has that $A$ is a unit by column matrix.

Consider $C_1 \supset C_2$ Reed-Solomon Codes over $\mathbb{F}_{16}$ with parameters $[15,13,3]$
and $[15,8,8]$, respectively. We consider the unit $g=x^5 + \alpha^{10}x^3 + \alpha^2x^2 + \alpha^2$. One has that the code $C=[C_1C_2]\cdot A$ has parameters $[30,21,7]$. Hence, its error correction capability is $t=3$. Let $\tau_1=1, \tau_2=3$ be the error bounds for $C_1$ and $C_2$, for list decoding algorithms with multiplicity $1$. Thus, the error bound for Algorithm \ref{alg:dec} is $\tau=3$ and we have a unique decoding algorithm for $C$.

Note that we may consider unique decoding algorithms for $C_1$ and $C_2$ since $\tau_i=\lfloor \frac{d_i -1}{2} \rfloor$, for $i=1,2$. Hence, in this case, we can reduce the complexity of the algorithm by considering unique decoding algorithms.
\end{example}

\section{Conclusion}

In this article we described a list-decoding algorithm for a class of Matrix-Product codes, we computed its error bound and complexity. This algorithm can become computationally intense, however we show that for small $s,\ell$ and considering Reed-Solomon codes as constituent codes, the algorithm does not become computationally intense. Furthermore, we are able to bound the probability of getting more than one codeword as output. The main advantage of this approach with respect to Reed-Solomon codes is the possibility of considering longer codes without increasing the field size and still using the fast decoding algorithms \cite{peter,Lee-Michael} for the constituent codes. Moreover, we can consider a bounded distance decoding, that decodes up to half of the minimum distance, for Matrix-Product codes with polynomial units, a family with very good parameters.

\bibliographystyle{plain}
\bibliography{ListD-MPC}

\end{document}